\documentclass[letterpaper]{article} 
\usepackage{aaai24}  
\usepackage{amsmath}
\usepackage{amsfonts}
\usepackage{subcaption}

\usepackage{times}  
\usepackage{helvet}  
\usepackage{courier}  
\usepackage[hyphens]{url}  
\usepackage{graphicx} 
\urlstyle{rm} 
\usepackage{natbib}  
\usepackage{caption} 
\frenchspacing  
\setlength{\pdfpagewidth}{8.5in}  
\setlength{\pdfpageheight}{11in}  
%
\usepackage{algorithm}
\usepackage{algorithmic}

%
\usepackage{newfloat}
\usepackage{listings}
\DeclareCaptionStyle{ruled}{labelfont=normalfont,labelsep=colon,strut=off} 
\lstset{%
	basicstyle={\footnotesize\ttfamily},
	numbers=left,numberstyle=\footnotesize,xleftmargin=2em,
	aboveskip=0pt,belowskip=0pt,%
	showstringspaces=false,tabsize=2,breaklines=true}
\floatstyle{ruled}
\newfloat{listing}{tb}{lst}{}
\floatname{listing}{Listing}
%
\pdfinfo{
/TemplateVersion (2024.1)
}

\setcounter{secnumdepth}{0} 

%

\title{Graph Bayesian Optimization for Multiplex Influence Maximization}
\author {
    Zirui Yuan\textsuperscript{\rm 1},
    Minglai Shao\textsuperscript{\rm 1}\thanks{Corresponding author},
    Zhiqian Chen\textsuperscript{\rm 2}
}
\affiliations {
    \textsuperscript{\rm 1}School of New Media and Communication, Tianjin University, China\\
    \textsuperscript{\rm 2}Department of Computer Science and Engineering, Mississippi State University,  USA \\
    yzr@tju.edu.cn, shaoml@tju.edu.cn, zchen@cse.msstate.edu
}

\usepackage{bibentry}
\usepackage[english]{babel}
\usepackage{amsmath}
\usepackage{amsthm}
\usepackage{amssymb}
\theoremstyle{definition}
\newtheorem{definition}{Definition}
\theoremstyle{plain} 
\newtheorem{theorem}{Theorem}

\begin{document}

\maketitle

\begin{abstract}
Influence maximization (IM) is the problem of identifying a limited number of initial influential users within a social network to maximize the number of influenced users. However, previous research has mostly focused on individual information propagation, neglecting the simultaneous and interactive dissemination of multiple information items. In reality, when users encounter a piece of information, such as a smartphone product, they often associate it with related products in their minds, such as earphones or computers from the same brand. Additionally, information platforms frequently recommend related content to users, amplifying this cascading effect and leading to multiplex influence diffusion.

This paper first formulates the Multiplex Influence Maximization (Multi-IM) problem using multiplex diffusion models with an information association mechanism. In this problem, the seed set is a combination of influential users and information. To effectively manage the combinatorial complexity, we propose Graph Bayesian Optimization for Multi-IM (GBIM). The multiplex diffusion process is thoroughly investigated using a highly effective global kernelized attention message-passing module. This module, in conjunction with Bayesian linear regression (BLR), produces a scalable surrogate model. A data acquisition module incorporating the exploration-exploitation trade-off is developed to optimize the seed set further.
Extensive experiments on synthetic and real-world datasets have proven our proposed framework effective. The code is available at \url{https://github.com/zirui-yuan/GBIM}.
\end{abstract}

\section{Introduction}
The rapid growth of online social networks has sparked substantial interest among researchers in understanding the dynamics of information dissemination within these networks. This interest has led to the study of influence maximization (IM), an optimization problem that aims to maximize the influence spread within a specific diffusion model by selecting a limited number of seeds \cite{kempe2003maximizing}. IM has garnered considerable attention from both industry and academia due to its relevance in various real-world applications, including viral marketing, epidemic control, and rumor blocking \cite{li2018influence, zhang2022blocking,li2022survey}. 
\begin{figure}[t]
\centering
\includegraphics[width=\linewidth]{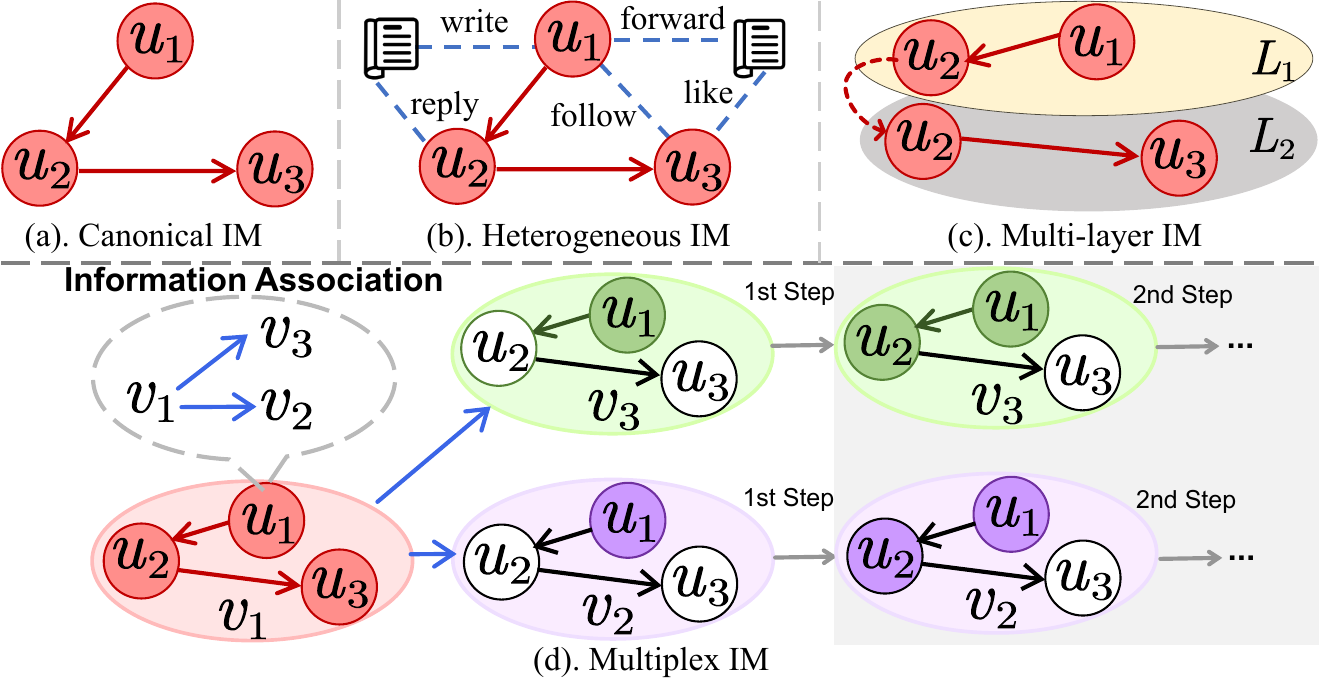}
	\caption{The comparison of related IM problems, and $u_1$ is the initial seed user. (a) Canonical IM in the homogeneous network. (b) Heterogeneous IM on networks with multiple node and edge types. (c) Multi-layer IM with users in two social network platforms. (d) Multiplex IM with three information items $\{v_1, v_2, v_3\}$, and $(u_1, v_1)$ is the seed.}
\label{fig:im}
\end{figure}

Most IM literature focuses on the canonical setting, which only considers an individual information item spread over a homogeneous social network based on an information diffusion model. Researchers have made great progress in devising various efficient and effective methods to address the canonical IM problem, utilizing techniques such as simulation, proxy, and sketch \cite{goyal2011celf++, borgs2014maximizing, tang2015influence}. 
In recent years, IM research has witnessed a growing interest in exploring more realistic scenarios beyond the canonical setting, including \textit{Heterogeneous} IM and \textit{Multi-Layer} IM. 
Heterogeneous IM (Figure \ref{fig:im}b) addresses the IM problem in heterogeneous networks with diverse node and edge types. Researchers commonly employ techniques involving network schema and meta-paths to construct propagation pathways. Building on this, both traditional and learning-based methods can be designed to optimize the seed set \cite{zhan2015influence, deng2019influence, li2022mahe}.
Multi-Layer IM (Figure \ref{fig:im}c) discusses the cross-layer nature of information dissemination across different social network platforms. These studies consider the sharing of nodes between networks as a mechanism enabling information to propagate across layers. Some approaches for tackling this problem involve integrating networks into one unified large network and applying existing IM algorithms \cite{kuhnle2018multiplex, keikha2020influence, katukuri2022cim}. 

While these efforts have succeeded, they still concentrate on scenarios where individual information spreads in isolation. However, real-world online social networks typically involve the simultaneous dissemination of multiple information items, with heterogeneous diffusion patterns over these items. Additionally, interconnected information items often result in multiplex influence, further complicating the dynamics of information spread. This study introduces the Multiplex Influence Maximization (Multi-IM) problem, which aims to maximize the multiplex influence within the constraints of a fixed seed budget. As depicted in Figure \ref{fig:im}d, three interconnected information items represented by $\{v_1, v_2, v_3\}$ (e.g., three complementary products) propagate concurrently in a social network. This can be modeled as a multiplex network consisting of three propagation layers, each representing the dissemination of one information item, denoted by the colors red, purple, and green, respectively.
When information item $v_1$ influences user $u_1$, it creates multiplex influence due to the associations among the three information items. Consequently, user $u_1$ also adopts the information from $v_2$ and $v_3$, activating $u_1$ on the propagation layers of $v_2$ and $v_3$ in this multiplex network. Moreover, the preference biases of users towards items lead to different propagation results.

The new mode of propagation presents the following difficulties for seed set selection. 
\textbf{First}, the presence of multiple information items increases combinatorial complexity. To maximize multiplex influence, we must not only select users to disseminate information but also determine the most pertinent information items for those users.
\textbf{Second}, this novel modeling of propagation results in a heterogeneous, multidimensional structure that introduces a new type of geometric complexity.
\textbf{Furthermore}, compared to Canonical IM, this propagation modeling generates larger dimensions and data volume, resulting in catastrophic data processing issues. Numerous layers representing item-specific propagation environments would considerably increase computational costs in Multi-IM, making it challenging for conventional techniques to complete tasks in a timely manner.

We propose GBIM, a Graph Bayesian Optimization framework for Multi-IM, to tackle these challenges. By effectively navigating the complex search space, GBIM identifies the most influential users and the optimal information items to maximize multiplex influence. 
To address the heterogeneous and multi-layered geometric structure generated by the new propagation modeling, a highly effective global kernelized attention message-passing module based on Positive Random Features (PRF) \cite{choromanski2020rethinking} is employed. This module acquires a comprehensive understanding of the multiplex diffusion process, yielding a reliable influence estimation. GBIM combines this module with Bayesian linear regression to create a scalable surrogate model that facilitates efficient data processing. In addition, the data acquisition module incorporates the exploration-exploitation trade-off, enabling effective exploration of the vast search space for optimizing the seed set with the highest observed performance.
Our contributions are as follows:
\begin{itemize}
    \item First formulate the Multi-IM problem, which models the scenario that multiple information items propagate and interact in a multiplex network.
    \item Devise a highly effective global kernelized attention message-passing module to learn the complex multiplex diffusion process.
    \item Present GBIM to optimize the seed set with the highest observed performance efficiently.
    \item Conduct extensive experiments to demonstrate the performance of the proposed method.
\end{itemize}

\section{Related Work}
\subsubsection{Influence Maximization.}
The influence maximization (IM) problem was introduced in seminal work by \cite{kempe2003maximizing}, relying on diffusion models like Linear Threshold and Independent Cascade \cite{li2018influence,li2022survey}. Much research has developed traditional simulation, proxy, and sketch-based methods \cite{goyal2011celf++, borgs2014maximizing, tang2015influence}, as well as recent learning-based techniques \cite{ling2023deep}. However, these canonical IM studies focus on single information spread in homogeneous networks.
Recent interest has emerged in more realistic settings like Heterogeneous IM and Multi-Layer IM. Heterogeneous IM handles diverse nodes and edges using techniques involving meta-paths \cite{zhan2015influence, deng2019influence, li2022mahe}. Multi-Layer IM examines cross-platform information spread via node sharing \cite{kuhnle2018multiplex, keikha2020influence, katukuri2022cim}. While incorporating complex networks, these works still consider individual information.
Multi-information settings are first discussed in Competitive IM, which handles a purely competitive scenario \cite{bharathi2007competitive}. The complement relation between information items is only investigated in \cite{lu2015competition} but is limited to the IC model. Recently, some studies have examined multi-information scenarios but did not account for interconnections between information items \cite{ni2020information, wu2021parallel, fang2022greedy}.
In summary, modeling complex interactions among multiple information items remains a key limitation. Our work introduces a multiplex network perspective to model multi-information dissemination.

\subsubsection{Bayesian Optimization.}
Bayesian optimization (BO) \cite{mockus1998application} is a method to optimize complex black-box functions relying on Bayesian statistical models paired with optimization algorithms. It requires a surrogate model representing the target function, typically based on Gaussian processes (GPs) \cite{brochu2010bayesian, snoek2012practical}. However, GPs scale cubically with increasing observations, challenging massively parallel optimization requiring many evaluations. 
Neural networks can serve as a practical surrogate model, leveraging flexible representation power to model complex functions and enhance scalability \cite{snoek2015scalable,springenberg2016bayesian, perrone2018scalable,ma2019deep}.  
In this work, we develop a highly effective global kernelized attention message-passing module to learn the multiplex diffusion process and serve as non-linear basis functions for Bayesian linear regression to yield a scalable surrogate model, leveraging both the nonlinear fitting capabilities of neural networks and the effective statistical properties of Bayesian.

\section{Problem and Model Definition}
In this work, we examine a context where $m$ distinct pieces of information are concurrently disseminated across a weighted, directed social network, represented as $\mathcal{G}=(\mathcal{U}, \mathcal{E}_\mathcal{U})$. 
Here, $\mathcal{U} = \{u_1,u_2,...,u_n\}$ signifies the user set, while $\mathcal{E}_\mathcal{U}$ denotes the edge set, with each $e_{i,j} \in \mathcal{E}_\mathcal{U}$ carrying a specific weight $w_{i,j}$. We introduce the information association network as $\mathcal{I} = (\mathcal{V}, \mathcal{E}_\mathcal{V})$, where $\mathcal{V} = \{v_1,v_2,...,v_m\}$ represents the set of information items and $\mathcal{E}_v$ encompasses the edges linking these items. Furthermore, we consider a preference matrix, $\mathbf{P} \in \mathbb{R}^{n \times m}$, in which the element $p_{i,j}$ reflects the inclination of user $u_i$ towards item $v_j$. The multiplex diffusion model $\mathcal{M}$ is expressed as $y = \mathcal{M}(\boldsymbol{x};\mathcal{G},\mathcal{I},\mathbf{P})$. In this representation, the input $\boldsymbol{x} = \{...,(u,v),...\}$ indicates a seed set containing multiple user-item pairs, while each pair $(u,v) \in \boldsymbol{x}$ implies that user $u$ is initially influenced by the information item $v$ during the execution of $\mathcal{M}$. The output $y \in \mathbb{N}^+$ signifies the multiplex influence exerted by all items. 
Grounded on the formalization above, the Multi-IM problem is defined as follows:

\begin{definition}[Multiplex Influence Maximization]
\textit{The objective of the Multi-IM problem is to strategically choose a maximum of $k$ user-item pair from $\mathcal{U}$ and $\mathcal{V}$, each user and item can only be selected once, so as to maximize the overall multiplex influence.}
\begin{equation}
\boldsymbol{x}^*=\arg\max_{|\boldsymbol{x}| \le k} \mathcal{M}(\boldsymbol{x};\mathcal{G},\mathcal{I},\mathbf{P}),
\end{equation}
\begin{equation*}
\text{s.t.}\quad u_i \neq u_j,  v_i \neq v_j, \quad\forall (u_i,v_i),(u_j,v_j) \in \boldsymbol{x},
\end{equation*}
\textit{where $\boldsymbol{x}^*$ represents the optimal seed node set capable of generating the maximum multiplex influence in $\mathcal{M}$.}
\end{definition}
In reality, the propagation dynamics of each information item are different since users are inclined to spread information aligning with their interests while overlooking content that does not appeal to their preferences. Besides, when users are influenced, they may associate other interconnected information in their minds, resulting in multiplex influence.
To model this realistic scenario, we introduce the multiplex influence diffusion model $\mathcal{M}$ along with the association mechanism in a multiplex network perspective.

A multiplex network is a graph with multiple layers, where each layer contains a network on the same set of nodes.
The multiplex network in $\mathcal{M}(\boldsymbol{x};\mathcal{G},\mathcal{I},\mathbf{P})$ contains $m$ layers, where each layer models the propagation environment for one information item and shares the same structure as $\mathcal{G}$. The commonly used information diffusion models like Linear Threshold (LT) and Independent Cascade (IC) can simulate propagation on each layer but need adaption to incorporate inherent heterogeneity stemming from preference matrix $\mathbf{P}$.
Specifically, within the propagation layer of information item $v_k$, the threshold for user $u_i$ is adjusted to $1-p_{i,k}$ when using the LT model. For the IC model, the influence probability of edge $e_{i,j}$ is modified to $w_{i,j} \cdot p_{j,k}$.
We propose an association mechanism to model the inter-layer propagation across information items:
\begin{definition}[Association Mechanism]
\textit{The Association Mechanism in the multiplex influence diffusion model refers to the process by which activated users spread across the interconnected information item propagation layers. }
\end{definition}
Specifically, when user $u_i$ gets activated by information $v_j$, they have a probability $\beta*p_{i,k}$ of associating to other adjacent information $v_k \in \mathcal{N}_I(v_j)$ and self-activating in $v_k$'s propagation layer. Here, $\mathcal{N}_I(v_j)$ denotes the neighbor information items of $v_j$ in $\mathcal{I}$, and $\beta$ is a base ratio scaling the inter-layer association strength. This process resembles diffused thinking along $\mathcal{I}$, terminating when user $u_i$ makes no more associations.

Building on the multiplex network structure and association mechanism, we formally define the multiplex influence model as follows:

\begin{definition}[Multiplex Diffusion Model]
\textit{The multiplex diffusion model $\mathcal{M}(\boldsymbol{x};\mathcal{G},\mathcal{I},\mathbf{P})$ is a model characterizing the dynamical diffusion process of multiple information items in a multiplex network. Each layer has a heterogeneous diffusion pattern depending on the preference matrix $\mathbf{P}$, and the association mechanism allows influenced users to propagate across layers based on $\mathcal{I}$. The output is the expected activated number at the multiplex network.}
\end{definition}
Let $\sigma_{i}(\boldsymbol{x})$ denote the number of the users influenced by item $v_i$, we define the multiplex influence output by $\mathcal{M}$ as:
\begin{equation}
 \mathcal{M}(\boldsymbol{x};\mathcal{G},\mathcal{I},\mathbf{P}) = \sum_{i=1}^m{\sigma_{i}(\boldsymbol{x})}.  
\label{eq:mtinf}
\end{equation}
The multiplex diffusion model $\mathcal{M}(\boldsymbol{x})$ operates in discrete time steps. At step 0, the starting state is configured based on the input seed set $\boldsymbol{x}$. In each following step, the association mechanism is executed for every newly activated user. Heterogeneous information diffusion then takes place on propagation layers with active users. The model repeats this process until no further activations occur and outputs the final multiplex influence.

\begin{figure*}[t]
\centering
\includegraphics[width=\textwidth]{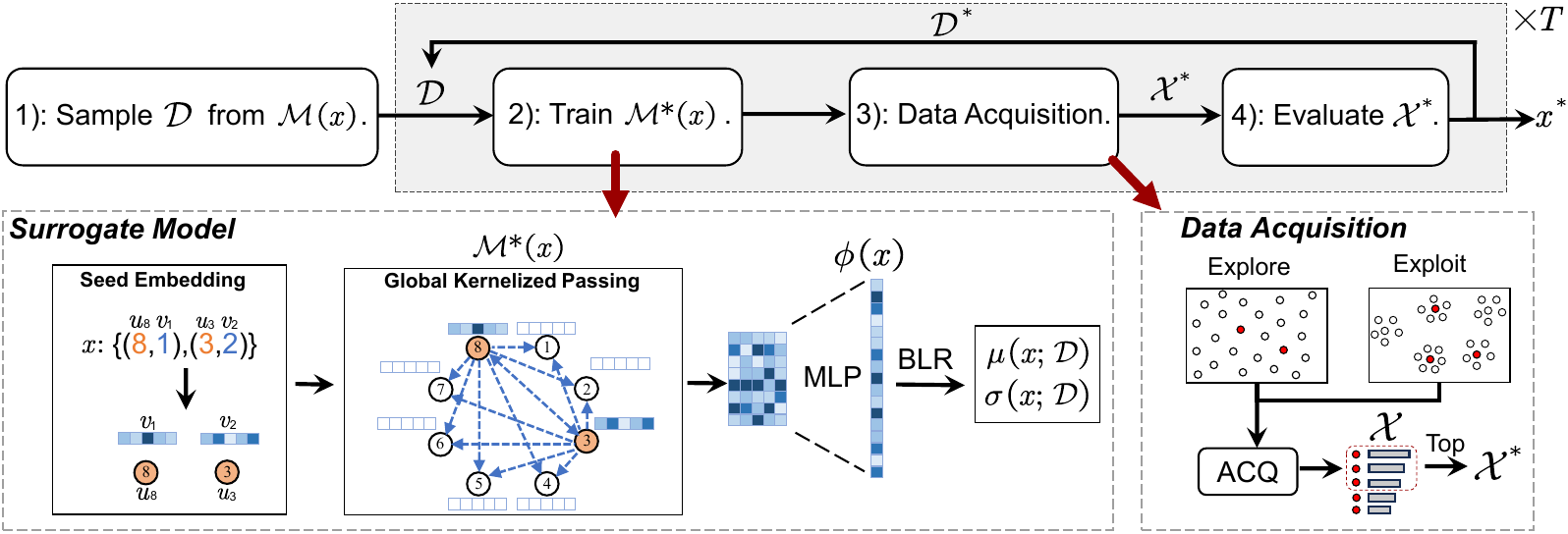}
\caption{The overview of the proposed GBIM framework. This framework includes two modules: surrogate model and data acquisition. Initially, random seeds are evaluated by the true model $\mathcal{M}(\boldsymbol{x})$ to form dataset $\mathcal{D}$. We then iteratively: (1) train surrogate model $\mathcal{M}^*(\boldsymbol{x})$ on $\mathcal{D}$; (2) sampling and evaluate the candidates $\mathcal{X}$ via $\mathcal{M}^*(\boldsymbol{x})$, selecting top K seed sets to be $\mathcal{X}^*$; (3) assess $\mathcal{X}^*$ via $\mathcal{M}(\boldsymbol{x})$ to expand $\mathcal{D}$. Finally, the optimal $\boldsymbol{x}^*$ from $\mathcal{D}$ with maximal influence is selected.}
\label{fig:gbim_arch}
\end{figure*}

\section{GBIM: Graph Bayesian Optimization Framework for Multi-IM}
\subsection{Framework Overview}

The process of our suggested GBIM framework is illustrated in Figure \ref{fig:gbim_arch}. This framework encompasses two primary modules: the surrogate model and data acquisition. Initially, seeds are generated at random and assessed using the objective multiplex diffusion model $\mathcal{M}(\boldsymbol{x})$, forming the preliminary training dataset $\mathcal{D} = \{(\boldsymbol{x}_1,y_1),...,(\boldsymbol{x}_N,y_N)\}$. Subsequently, we iteratively: (1) train a surrogate diffusion model, denoted as $\mathcal{M}^*(\boldsymbol{x})$, to fit the present dataset $\mathcal{D}$ and employ the output from the final layer of $\mathcal{M}^*(\boldsymbol{x})$ as a set of basis functions $\boldsymbol{\phi(x)}$, integrate it with a Bayesian linear regressor (BLR) to capture uncertainty; (2) sample an unobserved candidate set $\mathcal{X}$ with the explore-exploit tradeoff, select the top K seed sets with elevated acquisition values into $\mathcal{X}^*$; (3) appraise $\mathcal{X}^*$ using $\mathcal{M}(\boldsymbol{x})$ to curate a new dataset, subsequently expanding the training data $\mathcal{D}$. Upon the conclusion of the optimization cycle, the optimal $\boldsymbol{x}^*$ from the final observed dataset $\mathcal{D}$, showcasing the maximum spread number, is chosen as the definitive seed set.

\subsection{Surrogate Model}

Message-passing graph neural networks (GNNs) are widely recognized for modeling geometric structures. By iteratively updating node representations through aggregating information from adjacent nodes, they bear resemblances to the influence diffusion model. However, prevalent GNNs encounter challenges in addressing long-range dependencies, which are pivotal in the influence diffusion model. In our study, we introduce an optimized global message-passing graph neural network to learn the multiplex influence diffusion and leverage a BLR head to capture uncertainty.

\subsubsection{Global Kernelized Attention Message Passing.} 
The input seed set $\boldsymbol{x} = \{...,(u_i,v_j),...\}$ can be represented by a matrix $\mathbf{S} \in \mathbb{R}^{n \times m}$ where $\mathbf{S}_{i,j} = 1$ if $(u_i,v_j) \in \boldsymbol{x}$ and $\mathbf{S}_{i,j} = 0$ otherwise.
We first encode $\mathbf{S}$ into a low-dimensional status matrix $\mathbf{X} \in \mathbb{R}^{n \times d}$. The non-zero rows of $\mathbf{X}$ represent the status information of the seed users, while zero rows represent non-seed users. Let $\mathbf{H} \in \mathbb{R}^{n \times d}$ be the node feature matrix and $\mathbf{Z} \in \mathbb{R}^{n \times d}$ the output status matrix.
We can define a global attention message passing as follows:
\begin{equation}
\mathbf{Z} = \textrm{softmax}(\frac{\mathbf{H}\mathbf{W}_Q(\mathbf{H}\mathbf{W}_K)^\top}{\sqrt{d}})\mathbf{X}\mathbf{W}_V,
\label{eq:softmax}
\end{equation}
where $\mathbf{W}_Q$, $\mathbf{W}_K$ and $\mathbf{W}_V$ are learnable projection matrices. However, this incurs $\mathcal{O}(n^2)$ complexity, hindering scalability to large graphs. To accelerate computation, we further derive a kernel view of Equation \ref{eq:softmax}. Denoting the $i$-th row of $\mathbf{X}$, $\mathbf{H}$ and $\mathbf{Z}$ as $\mathbf{x}_i$, $\mathbf{h}_i$, $\mathbf{z}_i$ respectively, the kernelized formulation is:
\begin{equation}
\mathbf{z}_i= \sum_{j=1}^{n} \frac{\kappa(\mathbf{h}_i\mathbf{W}_Q, \mathbf{h}_j\mathbf{W}_K)}{\sum_{k=1}^{n}\kappa(\mathbf{h}_i\mathbf{W}_Q, \mathbf{h}_k\mathbf{W}_K)}(\mathbf{x}_j\mathbf{W}_V),
\label{eq:kernel}
\end{equation}
where $\kappa(\cdot,\cdot): \mathbb{R}^d \times \mathbb{R}^d \to \mathbb{R}_+$ is a positive-definite kernel measuring the pairwise similarity. 
We further randomly choose finite set of $t$ basis functions to approximate the kernel function:
\begin{equation}
\kappa(\mathbf{x},\mathbf{x}') \approx \varphi(\mathbf{x})^\top\varphi(\mathbf{x}'),
\end{equation}
\begin{equation}
\varphi(\mathbf{x}) = \frac{\textrm{exp}(\frac{-||\mathbf{x}||^2}{2})}{\sqrt{t}}[\textrm{exp}(\mathbf{w}_1^\top\mathbf{x}),\cdot\cdot\cdot,\textrm{exp}(\mathbf{w}_t^\top\mathbf{x})],
\end{equation}
where $\varphi(\cdot): \mathbb{R}^d \to \mathbb{R}^t$ is positive random feature map function and  $\mathbf{w}_k $ is independently sampled from $\mathcal{N}(0,I_d)$. It enables us to rewrite Equation \ref{eq:kernel} as follows:
\begin{equation}
\mathbf{z}_i= \sum_{j=1}^{n} \frac{\varphi(\mathbf{h}_i\mathbf{W}_Q)\varphi(\mathbf{h}_j\mathbf{W}_K)^\top}{\sum_{k=1}^{n}\varphi(\mathbf{h}_i\mathbf{W}_Q)\varphi(\mathbf{h}_k\mathbf{W}_K)^\top}(\mathbf{x}_j\mathbf{W}_V).
\end{equation}
The \textit{dot-then-exponentiate} in Equation 
\ref{eq:softmax} then converts into \textit{inner-product}, which enables two summations to be shared by each user:
\begin{equation}
\mathbf{z}_i=  \frac{\varphi(\mathbf{h}_i\mathbf{W}_Q)\sum_{j=1}^{n}\varphi(\mathbf{h}_j\mathbf{W}_K)^\top(\mathbf{x}_j\mathbf{W}_V)}{\varphi(\mathbf{h}_i\mathbf{W}_Q)\sum_{k=1}^{n}\varphi(\mathbf{h}_k\mathbf{W}_K)^\top}.
\end{equation}
Finally, we obtain the matrix form of the Global Kernelized Attention Message Passing (GKAMP) module with $\mathcal{O}(n)$ complexity:
\begin{equation}
\mathbf{Z} = \frac{\varphi(\mathbf{H}\mathbf{W}_Q)(\varphi(\mathbf{H}\mathbf{W}_K)^\top(\mathbf{X}\mathbf{W}_V))}{\textrm{diag}(\varphi(\mathbf{H}\mathbf{W}_Q)(\varphi(\mathbf{H}\mathbf{W}_K)^\top \mathbf{1}_{n\times 1}))}.
\end{equation}

\subsubsection{Basis Functions Learning.}
Firstly, we use the GKAMP module defined above to learn the complex multiplex diffusion process of the objective multiplex influence model:
\begin{equation}
\mathbf{Z}_{out} = \mathbf{X} + \textrm{GKAMP}(\mathbf{X}),
\end{equation}
where  $\mathbf{Z}_{out} \in \mathbb{R}^{n \times d}$ represents the final status matrix after multiplex information diffusion. Then we use multi-layer perception (MLP) to regress this matrix to the prediction of multiplex influence:
\begin{equation}
\hat{y} = \textrm{MLP}(\mathbf{Z}_{out}),\end{equation}
and use the Mean Absolute Error (MAE) as the loss function:
\begin{equation}
\mathcal{L} = \sum_{i=1}^{N}|y_i-\hat{y}_i|.
\end{equation}
After training, we extract $\boldsymbol{\phi}(\boldsymbol{x}) \in \mathbb{R}^D$, the output of the last hidden layer of the MLP, as the basis functions for the Bayesian linear regression. 

\subsubsection{Adaptive Basis Regression.} In this work, we construct the surrogate model by adaptively combining the basis functions $\boldsymbol{\phi}(\boldsymbol{x})$ via Bayesian linear regression (BLR).
Let $\boldsymbol{\Phi} = [\boldsymbol{\phi}(\boldsymbol{x}_1),...,\boldsymbol{\phi}(\boldsymbol{x}_N)]$ denote the design matrix arising from the training data $\mathcal{D}$, and $\mathbf{y} \in \mathbb{R}^N$ denote the stack target vector. Consider a linear regression model $y = \mathbf{w}^\top\boldsymbol{\phi}(\boldsymbol{x}) + \mathbf{b}$, where $\mathbf{w} \sim \mathcal{N}(0,\sigma_{\mathbf{w}}^2\mathbf{I})$ and $\mathbf{b} \sim \mathcal{N}(0,\sigma_{b}^2\mathbf{I})$. For a new input $\boldsymbol{x}$, the predictive mean $\mu(\boldsymbol{x}; \mathcal{D})$ and variance $\sigma^2(\boldsymbol{x}; \mathcal{D})$ of the BLR are then given
by:
\begin{equation*}
\mu(\boldsymbol{x}; \mathcal{D}) = \mathbf{m}^\top\boldsymbol{\phi}(\boldsymbol{x}),
\end{equation*}
\begin{equation*}
\sigma^2(\boldsymbol{x}; \mathcal{D}) = \boldsymbol{\phi}(\boldsymbol{x})^\top\mathbf{A}^{-1}\boldsymbol{\phi}(\boldsymbol{x}) + \sigma_{b}^2,
\end{equation*}
where
\begin{equation*}
\mathbf{m} = \sigma_{b}^{-2}\mathbf{A}^{-1}\boldsymbol{\Phi}^\top\mathbf{y},
\end{equation*}
\begin{equation*}
\mathbf{A} = \sigma_{b}^{-2}\boldsymbol{\Phi}^\top\boldsymbol{\Phi} + \mathbf{I}\sigma_{\mathbf{w}}^{-2}.
\end{equation*}

\begin{theorem}
The surrogate model constructed by combining neural network basis functions $\phi(\boldsymbol{x})$ with Bayesian linear regression is a special case of Gaussian process regression with a linear kernel.
\end{theorem}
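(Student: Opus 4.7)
The plan is to exhibit, for the Bayesian linear regression (BLR) with basis functions $\boldsymbol{\phi}(\boldsymbol{x})$ and priors $\mathbf{w}\sim\mathcal{N}(0,\sigma_{\mathbf{w}}^2\mathbf{I})$, $b\sim\mathcal{N}(0,\sigma_b^2\mathbf{I})$, a kernel $k(\boldsymbol{x},\boldsymbol{x}')$ such that the induced prior over functions $f(\boldsymbol{x})=\mathbf{w}^\top\boldsymbol{\phi}(\boldsymbol{x})+b$ is exactly the Gaussian process $\mathcal{GP}(0,k)$, and then verify that the BLR predictive mean/variance coincide with the standard GP posterior formulas. Because $\boldsymbol{\phi}(\boldsymbol{x})$ is the fixed output of the pre-trained last hidden layer of the MLP after the GKAMP module, it plays the role of a deterministic feature map, and the remaining randomness is only in $\mathbf{w}$ and $b$; hence the argument reduces to the classical weight-space/function-space duality.

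First, I would observe that, conditional on the learned features, $f(\boldsymbol{x})$ is a linear combination of independent Gaussians, so any finite collection $(f(\boldsymbol{x}_1),\ldots,f(\boldsymbol{x}_M))$ is jointly Gaussian. Second, I would compute the prior moments
\begin{equation*}
\mathbb{E}[f(\boldsymbol{x})]=0,\qquad \mathrm{Cov}(f(\boldsymbol{x}),f(\boldsymbol{x}'))=\sigma_{\mathbf{w}}^2\,\boldsymbol{\phi}(\boldsymbol{x})^\top\boldsymbol{\phi}(\boldsymbol{x}')+\sigma_b^2,
\end{equation*}
which I would identify as a linear kernel in the feature space $\boldsymbol{\phi}$, i.e.\ $k(\boldsymbol{x},\boldsymbol{x}')=\langle\boldsymbol{\phi}(\boldsymbol{x}),\boldsymbol{\phi}(\boldsymbol{x}')\rangle$ up to the scaling $\sigma_{\mathbf{w}}^2$ and the bias offset $\sigma_b^2$. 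This immediately establishes that the prior is a GP with a linear kernel on $\boldsymbol{\phi}(\boldsymbol{x})$.

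Third, I would verify posterior equivalence on data $\mathcal{D}$. Using the Woodbury identity (equivalently, the matrix inversion lemma) I would rewrite the BLR posterior moments $\mu(\boldsymbol{x};\mathcal{D})=\mathbf{m}^\top\boldsymbol{\phi}(\boldsymbol{x})$ and $\sigma^2(\boldsymbol{x};\mathcal{D})=\boldsymbol{\phi}(\boldsymbol{x})^\top\mathbf{A}^{-1}\boldsymbol{\phi}(\boldsymbol{x})+\sigma_b^2$, with $\mathbf{A}=\sigma_b^{-2}\boldsymbol{\Phi}^\top\boldsymbol{\Phi}+\sigma_{\mathbf{w}}^{-2}\mathbf{I}$, in the function-space form
\begin{equation*}
\mu_{\mathrm{GP}}(\boldsymbol{x})=\mathbf{k}_*^\top(\mathbf{K}+\sigma_b^2\mathbf{I})^{-1}\mathbf{y},\qquad \sigma^2_{\mathrm{GP}}(\boldsymbol{x})=k(\boldsymbol{x},\boldsymbol{x})-\mathbf{k}_*^\top(\mathbf{K}+\sigma_b^2\mathbf{I})^{-1}\mathbf{k}_*+\sigma_b^2,
\end{equation*}
where $\mathbf{K}_{ij}=\sigma_{\mathbf{w}}^2\boldsymbol{\phi}(\boldsymbol{x}_i)^\top\boldsymbol{\phi}(\boldsymbol{x}_j)$ and $(\mathbf{k}_*)_i=\sigma_{\mathbf{w}}^2\boldsymbol{\phi}(\boldsymbol{x}_i)^\top\boldsymbol{\phi}(\boldsymbol{x})$. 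The algebra is routine once one absorbs the prior scale into the features via $\tilde{\boldsymbol{\phi}}=\sigma_{\mathbf{w}}\boldsymbol{\phi}$.

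The main obstacle I anticipate is purely notational rather than conceptual: carefully tracking the roles of $\sigma_{\mathbf{w}}^2$, $\sigma_b^2$, and the noise variance (the statement folds observation noise and bias prior into a single $\sigma_b^2$), and making sure the Woodbury step gives an \emph{identity}, not just an asymptotic match, so that the BLR surrogate is truly a GP with the linear kernel $k(\boldsymbol{x},\boldsymbol{x}')=\sigma_{\mathbf{w}}^2\boldsymbol{\phi}(\boldsymbol{x})^\top\boldsymbol{\phi}(\boldsymbol{x}')+\sigma_b^2$, i.e.\ a special case (with fixed, data-dependent feature map) of GP regression with a linear kernel in the adaptive feature space learned by GKAMP.
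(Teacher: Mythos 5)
Your proposal is correct and follows essentially the same route as the paper: both reduce to the classical weight-space/function-space duality, using the Woodbury identity to rewrite the BLR posterior mean and variance in the GP form with kernel $\kappa(\boldsymbol{x},\boldsymbol{x}')=\sigma_{\mathbf{w}}^2\boldsymbol{\phi}(\boldsymbol{x})^\top\boldsymbol{\phi}(\boldsymbol{x}')$ and noise $\sigma_b^2$. Your additional opening step (verifying joint Gaussianity and the prior moments of $f(\boldsymbol{x})=\mathbf{w}^\top\boldsymbol{\phi}(\boldsymbol{x})+b$) is a harmless elaboration the paper omits, and you correctly flag the only subtlety, namely that $\sigma_b^2$ plays the role of observation noise in the posterior formulas.
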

\begin{proof}
Given a dataset $\mathcal{D}$, Bayesian linear regression over input features $\boldsymbol{\phi}(\boldsymbol{x})$ has the following posterior predictive distribution: 
\begin{equation*}
p(y|\boldsymbol{x}, \mathcal{D}) = \mathcal{N}(y|\sigma_{b}^{-2}\mathbf{A}^{-1}\boldsymbol{\Phi}^\top\mathbf{y}, \boldsymbol{\phi}(\boldsymbol{x})^\top\mathbf{A}^{-1}\boldsymbol{\phi}(\boldsymbol{x})).
\end{equation*}
It can be rewritten as follows:
\begin{equation*}
p(y|\boldsymbol{x}, \mathcal{D}) = \mathcal{N}(y|\boldsymbol{k}'^\top(\mathbf{K}+\sigma^{2}_{b}\mathbf{I})^{-1}\mathbf{y},k'' - \boldsymbol{k}'^\top(\mathbf{K}+\sigma^{2}_{b}\mathbf{I})^{-1}\boldsymbol{k}'),
\end{equation*}
where $\mathbf{K} = \boldsymbol{\Phi}\Sigma_{\mathbf{w}}\boldsymbol{\Phi}^\top$, $\boldsymbol{k}' = \boldsymbol{\Phi}\Sigma_{\mathbf{w}}\boldsymbol{\phi}(\boldsymbol{x})$ and $k'' = \boldsymbol{\phi}(\boldsymbol{x})^\top\Sigma_{\mathbf{w}}\boldsymbol{\phi}(\boldsymbol{x})$.
This is equivalent to a Gaussian process with prior mean is zero and covariance function: $\kappa(x, x') = \boldsymbol{\phi}(\boldsymbol{x})^\top\Sigma_\mathbf{w}\boldsymbol{\phi}(\boldsymbol{x}')$, where $\Sigma_\mathbf{w} = \mathbf{I}\sigma^2_{\mathbf{w}}$.
This establishes that the surrogate model using neural networks and Bayesian linear regression is a special case of Gaussian process regression with the nonlinear mapping of $\boldsymbol{\phi}(\boldsymbol{x})$.
\end{proof}

\subsection{Data Acquisition}
Data acquisition is a critical component in Bayesian Optimization. In this component, we need to trade off exploration and exploitation and quantify the promise of unobserved inputs using an acquisition function.

\subsubsection{Acquisition Function.}
In this work, we adopt the expected improvement (EI) as the acquisition function.
The EI is defined as the expectation of the improvement function $I(\boldsymbol{x}) = \textrm{max}\{0, (\mu(\boldsymbol{x};\mathcal{D})-\textrm{max}(\mathbf{y}))\}$ at candidate point $\boldsymbol{x}$. It can be formulated as:
\begin{equation*}
a_{EI}(\boldsymbol{x};\mathcal{D}) = \sigma(\boldsymbol{x}; \mathcal{D})[\gamma(\boldsymbol{x})\mathcal{C}(\gamma(\boldsymbol{x}))+\mathcal{N}(\gamma(\boldsymbol{x});0,1)],
\end{equation*}
where 
\begin{equation*}
\gamma(\boldsymbol{x}) = \frac{\mu(\boldsymbol{x};\mathcal{D})-\textrm{max}(\mathbf{y})}{\sigma(\boldsymbol{x}; \mathcal{D})}.
\end{equation*}
Here $\mathcal{C}(\cdot)$ and $\mathcal{N}(\cdot;0,1)$ denote the cumulative distribution function and probability density function of the standard normal distribution, respectively.

\subsubsection{Explore-Exploit Tradeoff.}At each optimization round, we sample candidate sets $\mathcal{X}$ and compute their acquisition values. As users and items frequently present in influential seed sets are more likely to appear in the optimal set, $\mathcal{X}$ is sampled as follows: $\alpha$ from top 5\% high influence entries, the rest uniform randomly, where $\alpha$ is the exploit ratio.
Candidates with top 1\% acquisition values are chosen as $\mathcal{X}^*$ to evaluate via the objective multiplex diffusion model, obtaining new observations $\mathcal{D}^*$ to expand the current dataset $\mathcal{D}$.

\section{Experiments}
In the following experiments, we evaluate the effectiveness of our proposed GBIM framework on four real-world networks and one synthetic network for maximizing multiplex influence across a range of seed set sizes.
\subsection{Experiment Setup}
We evaluate the expected multiplex influence defined in Equation \ref{eq:mtinf} under Multi-LT and Multi-IC, two multiplex diffusion models extended by LT and IC. We enable each multiplex diffusion model to simulate until the diffusion process stops and report the average multiplex influence over 100 simulations.
\begin{figure*}[!t]
\centering
\includegraphics[width=\textwidth]{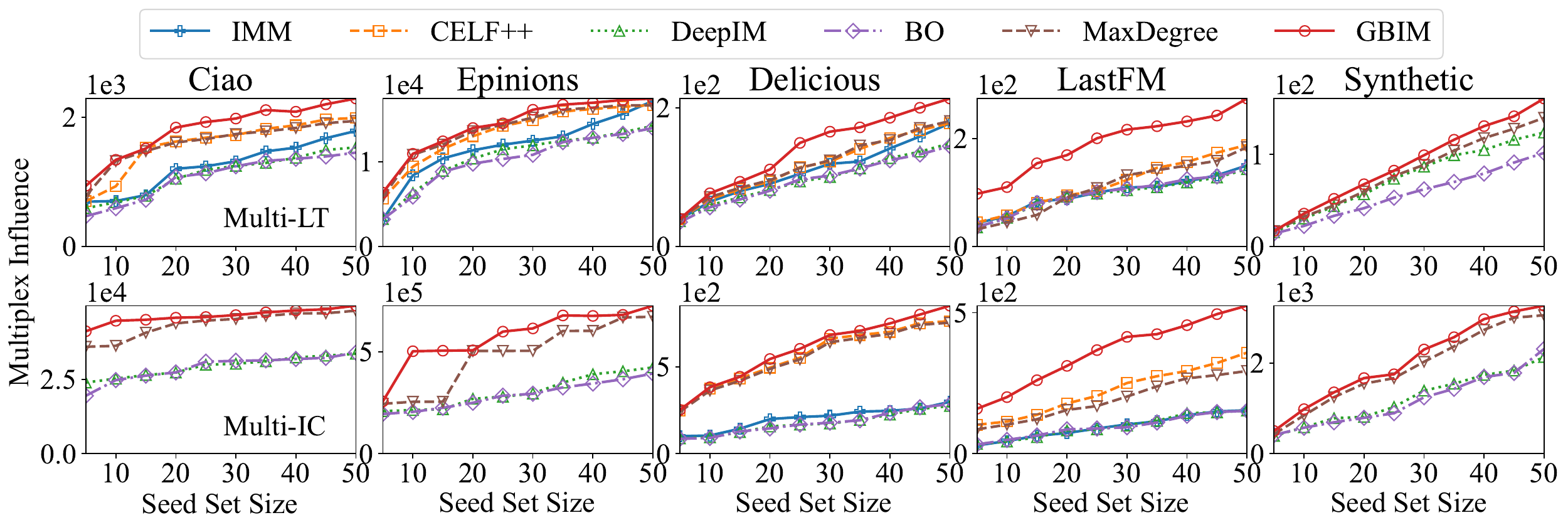}

	\caption{Performance comparison on Multi-LT (first row) and Multi-IC (second row) with the seed set size growth. The traditional approaches IMM and CELF++ exceeded time and memory limits under Multi-LT on the Synthetic dataset, and Multi-IC on Ciao, Epinions and Synthetic datasets. }
\label{fig:per}
\end{figure*}
\subsubsection{Datasets.}
We evaluate GBIM against other methods on four real-world social networking datasets: Ciao, Epinions\footnote{\url{https://www.cse.msu.edu/~tangjili/datasetcode/truststudy.htm}}, Delicious and LastFM\footnote{\url{https://grouplens.org/datasets/hetrec-2011/}}. 
These datasets containing genuine user-item interactions (i.e., ratings or frequencies) are instrumental in constructing a pragmatic preference matrix for the multiplex diffusion model.
We also use a synthetic Erdos-Renyi \cite{erdHos1960evolution} random graph with 30,000 nodes. For each real dataset, we sample some items and construct the item-item association network $\mathcal{I}$ by calculating cosine similarity over user interaction records (item pairs with similarity above 0.5 are connected). 
We use item-based collaborative filtering to generate the user preference matrix $\mathbf{P}$ from user-item interactions. 
For the synthetic dataset, we randomly construct the item network $\mathcal{I}$ and preference matrix $\mathbf{P}$. The statistics of all datasets are provided in Table \ref{tb:stat}.

\begin{table}[]

\resizebox{\linewidth}{!}{
\begin{tabular}{ccccc}
\hline
          & \#Users & \#UserEdges & \#Items & \#ItemEdges \\ \hline
Ciao      & 7317    & 170410      & 404     & 1018        \\
Epinions  & 18069   & 574064      & 411     & 1408        \\
Delicious & 1861    & 15328       & 536     & 750         \\
LastFM    & 1892    & 25434       & 501     & 906         \\
Synthetic & 30000   & 200000      & 1000    & 3000        \\ \hline
\end{tabular}}

\caption{The statistics of the datasets.}
\label{tb:stat}
\end{table}

\subsubsection{Baseline Methods.}
We compare the proposed GBIM framework with the following methods.
\begin{itemize}
\item IMM \cite{tang2015influence}: A sampling-based traditional IM algorithm that employs martingale analysis and bootstrap estimation.
\item  CELF++ \cite{goyal2011celf++}: An efficient greedy algorithm for canonical IM, which avoids unnecessary Monte Carlo simulations.
\item  DeepIM \cite{ling2023deep}: A recent learning-based IM framework, based on autoencoder and GAT, optimizing the seed set via projected gradient descent.
\item  BO: The Bayesian Optimization with Gaussian processes as the surrogate model for Multi-IM.
\item  MaxDegree: The ranking of the product of the degree of users and the degree of items.
    
\end{itemize}

\subsubsection{Implementation Details}
In multiplex influence models,  the inter-layer association strength $\beta$ is set as 0.3, and the weights of edges are set as the reciprocal of in-degree. In GBIM, the hidden dimension $d$ in GKAMP is set as 64, and we adopt a 4-layer MLP with hidden sizes 512, 1024, 1024, and 1024. The exploit rate $\alpha$ is set as 0.75. We sample 1000 instances at first and leverage the Adam optimizer \cite{kingma2014adam} with a learning rate of 0.001 for parameters learning. The experiments are implemented in a machine with the following configuration: RTX 2080 Ti GPU with 12GB VRAM, i7-9700 CPU@3.00GHz, 16GB RAM Ubuntu OS, and PyTorch 2.0.1 \cite{paszke2019pytorch}.

\begin{figure*}[htbp]
    \centering
        \subcaptionbox{Vary $n$ ($m$ = 100 and $k$ = 5)}{
        \includegraphics[width = .32\textwidth]{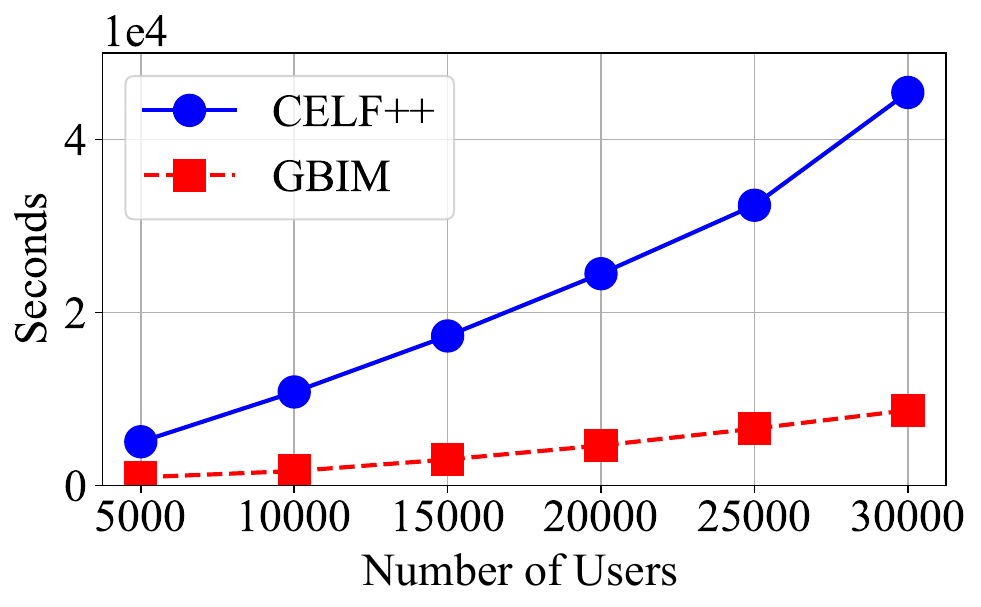}
        
    }
        \subcaptionbox{Vary $m$ ($n$ = 5000 and $k$ = 5)}{
        \centering
        \includegraphics[width = .32\textwidth]{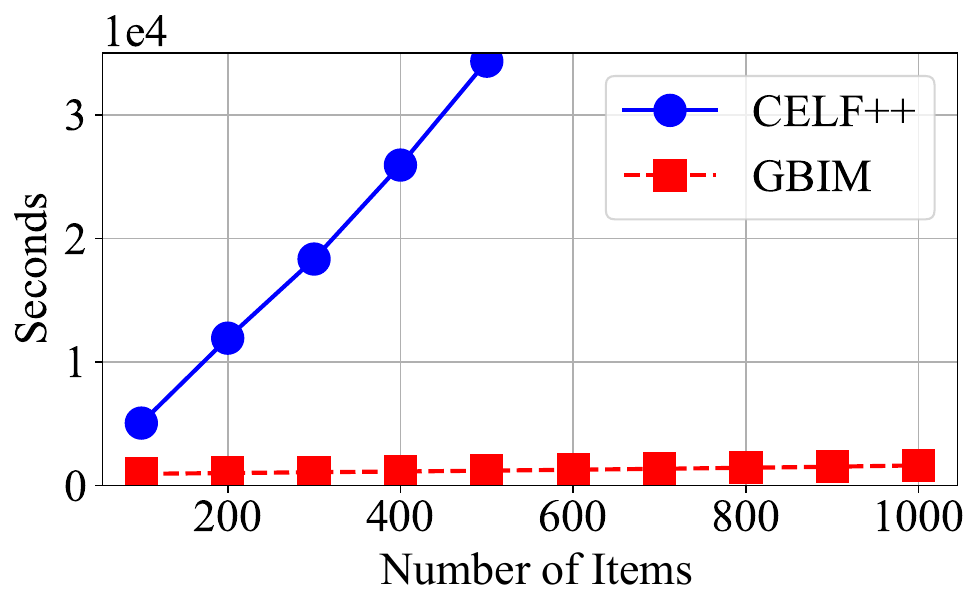}
    }
        \subcaptionbox{Vary $|\mathcal{D}|$ ($n$ = 5000, $m$ = 100 and $k$ = 5)}{
        \includegraphics[width = .32\textwidth]{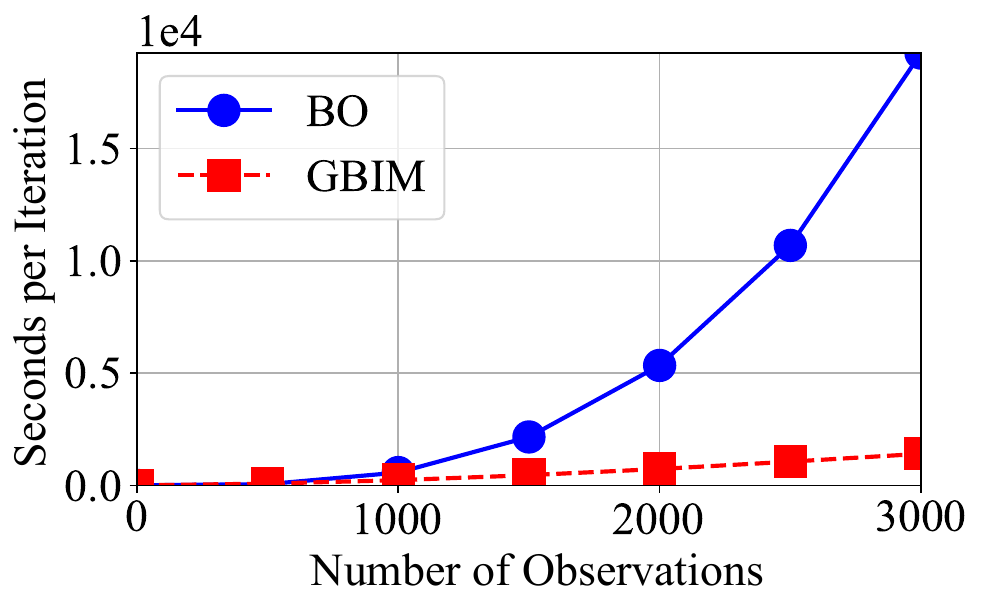}
    }
        \caption{Scalability of GBIM on the Multi-IC model of synthetic data. (a) Near-linear runtime scaling with the number of users. (b) Stable runtime as the number of items increases. (c) Linear runtime as the $|\mathcal{D}|$ increases.} 
\label{fig:scal}
\end{figure*}

\begin{figure}[t]
\centering
\includegraphics[width=0.9\linewidth]{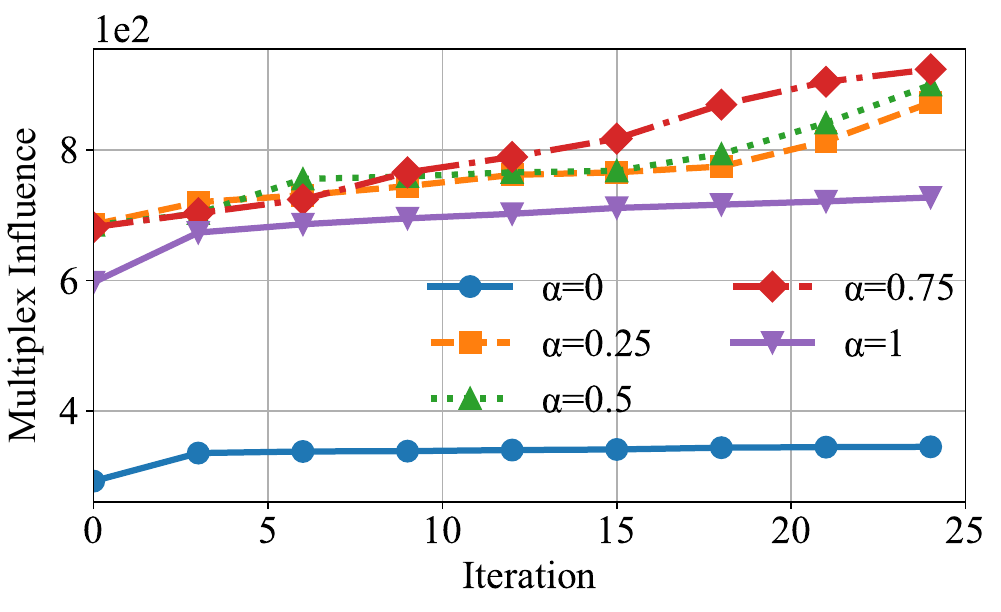}

	\caption{Performance of GBIM over iterations on the Multi-LT model of Ciao dataset, with different exploit rates $\alpha$. An appropriate value of $\alpha$ balances exploitation versus exploration, allowing GBIM to optimize efficiently.}
\label{fig:alpha}
\end{figure}

\subsection{Experimental Results}
As depicted in Figure \ref{fig:per}, GBIM consistently outperforms other methods across all experimental configurations and datasets. This substantial improvement is attributed to GBIM's Bayesian optimization strategy effectively navigating the immense search space, as well as its accurate surrogate model, which aptly captures the complex heterogeneous influence diffusion patterns and inter-layer associations.
In contrast, traditional methods like IMM and CELF++ encounter scalability bottlenecks, exceeding memory limits and runtime budgets within the Multi-LT Synthetic and Multi-IC Ciao/Epinions/Synthetic experiments. This highlights their computational constraints when applied to large-scale Multi-IM problems.
Meanwhile, the learning-based DeepIM falls short across all datasets. This can be attributed to the limitations of the GAT module they employ, which struggles to capture long-range dependencies across multiple layers. Additionally, standard Gaussian processes fail to effectively learn the intricate multiplex influence diffusion patterns, causing mediocre performance of BO methods.
Compared to these approaches, GBIM consistently achieves superior performance under seed sets of all sizes, demonstrating its robustness. For example, on the LastFM network, GBIM attained over 40\% higher multiplex influence spread than the best baseline method.

\subsection{Parameter Analysis}
In this subsection, we conduct experiments aimed at discussing the impact of the exploitation rate $\alpha$ within our data acquisition module on GBIM's optimization performance. Specifically, we assess five distinct values of $\alpha$: 0, 0.25, 0.5, 0.75, and 1. These evaluations are carried out on the Multi-LT model using the Ciao dataset, employing a seed set size of $k$=5. Other parameters remain set to their optimal configurations.
As illustrated in Figure \ref{fig:alpha}, the manipulation of exploit rates, such as $\alpha$=0.75, fosters swift initial convergence to the optimal solution. In contrast, lower values like $\alpha$=0.25 and 0.5 necessitate more iterations for the optimization process to culminate. This pattern emerges due to the higher $\alpha$ values that encourage the acquisition function to lean toward seeds akin to those previously deemed optimal, expediting the optimization. However, the scenario where $\alpha$=1 signifies complete exploitation devoid of exploration, thereby resulting in the entrapment of local optima. On the other hand, $\alpha$=0 indicates no exploitation at all, making the optimization significantly harder as the search space is not narrowed effectively.
In conclusion, an appropriate intermediate value of $\alpha$ balances exploitation and exploration, achieving efficient optimization. Empirically, $\alpha$ around 0.5 to 0.75 works well for GBIM across datasets.

\subsection{Scalability Analysis}
In this subsection, we evaluate the scalability of GBIM on the Multi-IC model using synthetic datasets. We start with a base network of 5000 users and 100 items, then progressively increase the number of users and items. As exhibited in Figure \ref{fig:scal}(a), the running time of GBIM scales nearly linearly as the number of users $n$ expands, also remaining consistently lower than CELF++. This favorable scalability is attributed to GBIM's efficient surrogate model based on global kernelized attention message passing, which embeds influence patterns into compact vector representations and performs efficient computations on the social network of size $n$, achieving significantly better scalability as $m$ grows large. In contrast, traditional IM methods perform costly simulations on the complete multiplex network of size $n \times m$, resulting in drastically higher complexity. As seen in Figure \ref{fig:scal}(b), CELF++'s runtime increases rapidly and exceeds time limits as the number of items grows, while GBIM is relatively stable. Additionally, our surrogate model based on neural networks and Bayesian regression demonstrates linear runtime growth with the training dataset size $|\mathcal{D}|$ in Figure \ref{fig:scal}(c). Whereas the standard Gaussian process BO method suffers from cubic scaling, becoming infeasible for large data.

\section{Conclusion}
In this work, we studied the intricate dynamics of concurrent multi-information propagation on directed social networks. Our core focus was solving the novel Multi-IM problem which aims to select the user-item pair into the seed set with a fixed budget to maximize multiplex influence. We first incorporated heterogeneous propagation patterns and the association mechanism into a multiplex diffusion model, where multiple information items disseminate in a multiplex network.
To address Multi-IM, we propose GBIM, a Graph Bayesian Optimization framework. We design a highly efficient global kernelized attention message-passing module to learn the complex multiplex diffusion patterns and integrate Bayesian linear regression to obtain a scalable surrogate model. Furthermore, we develop a data acquisition module with explore-exploit trade-off sampling strategy to optimize the seed set.
Extensive experiments on synthetic and real-world datasets demonstrate the scalability and effectiveness of GBIM. In the future, we will investigate how to learn complementary relations and competitive relations from data and incorporate them into a heterogeneous item-item network for a more realistic diffusion model.
\section*{Acknowledgements} 
This work is supported by NSFC program (No. 62272338) and NSF IIS award \#2153369.
\bibliography{aaai24}
\end{document}